\newtheorem{theorem}{Theorem}
\DeclareMathOperator{\diag}{diag}
\newcommand{\GG}{\mathcal{G}}
\newcommand{\E}{\mathcal{E}}
\newcommand{\ones}{\mathbf{1}}
\begin{document}

\preprint{APS/123-QED}

\title[Maximizing Algebraic Connectivity in Multiplex Networks]{Maximizing Algebraic Connectivity in Interconnected Networks}
\thanks{This material is based upon work supported by the National Science Foundation under Grants No. (DMS-1515810 and partially by CIF-1423411)}
\email{Corresponding author: heman@ksu.edu} 
\author{Heman Shakeri$^{1}$}
\author{Nathan Albin$^2$}%
\author{Faryad Darabi Sahneh$^1$}
\author{Pietro Poggi-Corradini$^2$}
\author{Caterina Scoglio$^1$}
\affiliation{$^1$Electrical and Computer Engineering Department, Kansas State University, Manhattan, Kansas, USA}\affiliation{$^2$Mathematics Department, Kansas State University, Manhattan, Kansas, USA}
\date{\today}

\begin{abstract}
Algebraic connectivity, the second eigenvalue of the Laplacian matrix, is a measure of node and link connectivity on networks. When studying interconnected networks it is useful to consider a multiplex model, where the component networks operate together with inter-layer links among them. In order to have a well-connected multilayer structure, it is necessary to optimally design these inter-layer links considering realistic constraints. In this work, we solve the problem of finding an optimal weight distribution for one-to-one inter-layer links
under budget constraint. 
We show that for the special multiplex configurations with identical layers, the uniform weight distribution is always optimal. 
On the other hand, when the two layers are arbitrary, increasing the budget reveals the existence of two different regimes. Up to a certain threshold budget, the second eigenvalue of the supra-Laplacian is simple, the optimal weight distribution is uniform, and the Fiedler vector is constant on each layer.  Increasing the budget past the threshold, the optimal weight distribution can be non-uniform. The interesting consequence of this result is that there is no need to solve the optimization problem when the available budget is less than the threshold, which can be easily found analytically.
\end{abstract}

\maketitle




Real-world networks are often connected together and therefore influence each other \cite{Arenas2014multilayer}. Robust design of interdependent networks is critical to allow uninterrupted flow of information, power, and goods in spite of possible errors and attacks \cite{Buldyrev2010}.
The second eigenvalue of the Laplacian matrix, $\lambda_2(L)$, is a good measure of network robustness \cite{Jamakovic2007}.
Fiedler shows that algebraic connectivity increases by adding links \cite{Fiedler1973}.
Moreover, it is harder to bisect a network with higher algebraic connectivity \cite{Fallat2003}. 
%
The second eigenvalue of the Laplacian matrix is also a measure of the speed of mixing for a Markov process on a network \cite{bremaud2013markov}.
Boyd et al. maximize the mixing rate by assigning optimum link weights in the setting of a single layer (\cite{boyd2004fastestChain} and \cite{Boyd2006FastestMP}).
%

For multiplex networks (see Fig. \ref{fig:multilayer}), a natural question is the following. Given fixed network layers, how should the weights be assigned to inter-layer links in order to maximize algebraic connectivity? 
\begin{figure}[b]
\includegraphics[scale=1]{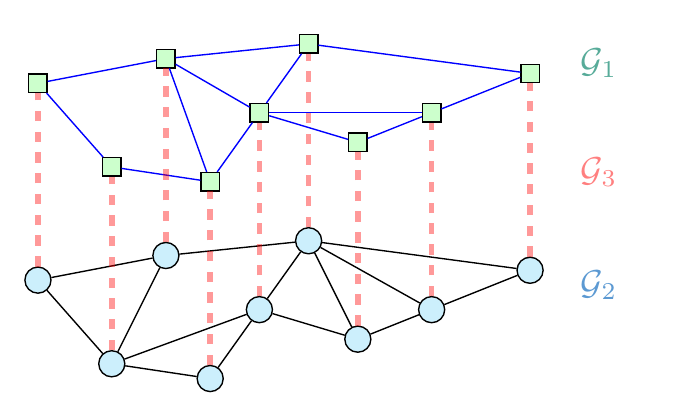}
\caption{A schematic of a multiplex network $\GG$ with two layers $\GG_1$, $\GG_2$, connecting through an inter-layer one-to-one structure $\GG_3$.}
\label{fig:multilayer}
\end{figure}

The behavior of $\lambda_2$, in the case of identical weights, i.e., with a fixed coupling weight $p$ for every inter-layer link, has been studied recently.
For instance, Gomez et al. observe that $\lambda_2(L)$ grows linearly with $p$ up to a critical $p^*$, and then has a non-linear behavior afterwards \cite{gomez2013diffusion}.
%
Radicchi and Arenas find bounds for this threshold value $p^*$ \cite{Radicchi2013}. Sahneh et al. compute the exact value analytically \cite{sahneh2014exact}.

Martin-Hernandez et al. analyze the algebraic connectivity and Fiedler vector of multiplex structures, with addition of a number of inter-layer links in two configurations;  diagonal (one-to-one) and random \cite{Gregorio2014algebraic}. They show that for the first case, algebraic connectivity saturates after adding a sufficient number of links.
Li et al. adopt a network flow approach to propose a heuristic that improves robustness of large multiplex networks by choosing from a set of inter-layer links with predefined weights \cite{Xin2015connectivity}. 

In this letter we remove the constraint of identical interlinking weights and pose the problem of finding the maximum algebraic connectivity for  a one-to-one interconnected structure between different layers in the presence of limited resources. 
We show that up to the threshold budget $p^*N$---where $p^*$ is the same threshold studied in \cite{gomez2013diffusion}, \cite{Radicchi2013}, and \cite{sahneh2014exact}---the uniform distribution of identical weights is actually optimal. For larger budgets, the optimal distribution of weights is generally not uniform.

Let $\mathcal{G}=\left( \mathcal{V},\mathcal{E}\right) $
represents a network and by $\mathcal{V}=\left\{ 1,\ldots,N\right\} $
and $\mathcal{E}\subset\mathcal{V}\times\mathcal{V}$, we denote the set of nodes and links. 
For a link $e$ between nodes $u$ and $v$, i.e, $e:\{u,v\}\in\mathcal{E}$, we define a nonnegative value $w_{uv}$ as the weight of the link.
The Laplacian matrix of $\GG$ can be defined as:
\begin{equation}\label{eq:LapRepre}
L= \sum_{ij\in \E}w_{ij}B_{ij}
\end{equation}
where $B_{ij}:= (e_i-e_j)(e_i-e_j)^T$ is the incidence matrix for link $ij$, and $e_i$ is a vector with $i$th component one and rest of its elements are zero.

For a multiplex network with two layers $\GG_1 = \left\{ \mathcal{V}_1,\mathcal{E}_1\right\}$ and $\GG_2=\left\{ \mathcal{V}_2,\mathcal{E}_2\right\}$ and $|\mathcal{V}_1| =| \mathcal{V}_2|$, we consider a bipartite graph $\mathcal{G}_3=\left\{ \mathcal{V},\mathcal{E}_3\right\}$ with $\mathcal{E}_3\subseteq \mathcal{V}_1\times \mathcal{V}_2$. The multiplex network $\GG$ is composed from $\GG_1$, $\GG_2$, and $\GG_3$ (Fig. \ref{fig:multilayer}).
We want to design optimal weights for $\GG_3$ to improve the algebraic connectivity of $\GG$ as much as possible with a limited budget, i.e., $\sum w_{ij} = c$.
Using Eq. \eqref{eq:LapRepre}, the Laplacian matrix of $\GG$ (supra-Laplacian matrix), is:
\begin{equation}
\label{eq:Laplacian}
L(w)=
\sum_{ij\in \E_2\cup \E_3}B_{ij}+\sum_{ij\in \E_3}w_{ij}B_{ij},
\end{equation}
where we use the notation $L(w)$ to make explicit the dependence of the Laplacian on the interlayer weights $w$.
From Eq. \eqref{eq:Laplacian}, the Laplacian, $L$, of the combined network takes the form

\begin{equation*}
  L(w) =
  \begin{bmatrix}
    L_1 & 0\\
    0 & L_2
  \end{bmatrix}
  + 
  \begin{bmatrix}
    W & -W\\
    -W & W
  \end{bmatrix},
\end{equation*}
where $L_1$ and $L_2$ are the Laplacians of the individual layers and $W=\diag(w)$ with $w\ge 0$ the inter-layer link weights satisfying the budget constraint $w^T\ones = c$.  We assume the two layers are connected independently, so that $\lambda_3(L)>0$, for all choices of $c$ and $w$.

The second eigenvalue can be characterized as the solution to to the optimization problem
\begin{equation}
  \label{eq:lambda2}
  \lambda_2(L) = \min_{\substack{v\ne 0\\v^T\ones=0}} \frac{v^TLv}{\|v\|^2}.
\end{equation}
The optimal weight problem, then, can be phrased as follows.  Given a budget $c\ge 0$, solve the problem
\begin{equation}
  \label{eq:F}
  F(c) := \max_{\substack{w\ge 0\\w^T\ones = c}} \lambda_2(L(w)).
\end{equation}

Since $L$ is an affine function of $w$, and $\lambda_2$ is a concave function of $L$, it follows that~\eqref{eq:F} is a convex optimization problem.  In fact, it can be recast as an SDP (similarly to the construction in~\cite{Boyd2006FastestMP}) and, thus, can be solved efficiently even for large networks using standard numerical methods.

Returning to~\eqref{eq:lambda2}, it is convenient to write $v$ in component form $v=(v_1^T,v_2^T)^T$ so that~\eqref{eq:lambda2} implies
\begin{equation}
  \label{eq:main-inequality}
  \begin{split}
  v_1^TL_1v_1 + & v_2^TL_2v_2 + (v_1-v_2)^TW(v_1-v_2) \\
  & - \lambda_2(L)\left(\|v_1\|^2+\|v_2\|^2\right)
  \ge 0 \qquad\forall \  v_1^T\ones = -v_2^T\ones.
  \end{split}
\end{equation}
Since $v$ must satisfy $v_1^T\ones = -v_2^T\ones$, we write $v_1$ and $v_2$ of the form
\begin{equation*}
  v_1 = \alpha\ones + u_1,\qquad v_2 = -\alpha\ones + u_2,\qquad
  u_1^T\ones = u_2^T\ones = 0,
\end{equation*}
for some constant $\alpha$.
Rewriting the terms in~\eqref{eq:main-inequality}, we observe that
\begin{equation*}
  \begin{split}
    (v_1-v_2)^T&W(v_1-v_2) \\
    &=(2\alpha\ones + u_1-u_2)^TW(2\alpha\ones
    + u_1-u_2)\\
    &= 4\alpha^2c + 4\alpha w^T(u_1-u_2) \\
    &~~+ (u_1-u_2)^TW(u_1-u_2)
  \end{split}
\end{equation*}
and that
\begin{equation*}
  \|v_i\|^2 = \|\alpha\ones\|^2 + \|u_i\|^2 = \alpha^2N + \|u_i\|^2\qquad
  \text{for }i=1,2.
\end{equation*}
Thus, Eq.~\eqref{eq:main-inequality} implies that
\begin{equation}
  \label{eq:reduced-inequality}
  \begin{split}
    u_1^TL_1u_1 &+ u_2^TL_2u_2 + 4\alpha^2 c + 4\alpha w^T(u_1-u_2)\\
     &+    (u_1-u_2)^TW(u_1-u_2) - \\
     & \lambda_2(L)\left(2\alpha^2N + \|u_1\|^2 + \|u_2\|^2\right) \ge 0\\
     &    \qquad\forall \alpha, u_1^T\ones=u_2^T\ones=0.
    \end{split}
\end{equation}
Setting $u_1=u_2=0$ in~\eqref{eq:reduced-inequality}, then, gives the inequality
\begin{equation*}
  4\alpha^2c - 2\alpha^2N\lambda_2(L) \ge 0\qquad\forall\alpha
\end{equation*}
which can only be true if
\begin{equation*}
  \lambda_2(L) \le \frac{2c}{N}.
\end{equation*}
Thus, we have proven the following theorem.
\begin{theorem}\label{thm:1}
  \label{thm:uniform-bound}
  For the two-layer problem described above, we have the bound
  \begin{equation}
    \label{eq:uniform-bound}
      F(c) \le \frac{2c}{N}.
  \end{equation}
\end{theorem}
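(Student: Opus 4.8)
The plan is to exhibit a single feasible test vector in the Rayleigh quotient~\eqref{eq:lambda2} that already forces the bound, and then to observe that the bound is independent of the particular weight vector $w$, so it survives passing to the maximum in~\eqref{eq:F}.

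First I would recall that $\lambda_2(L(w))$ is the minimum of $v^TL(w)v/\|v\|^2$ over all nonzero $v$ with $v^T\ones=0$. Writing $v=(v_1^T,v_2^T)^T$ in the block structure of the two layers, the constraint becomes $v_1^T\ones=-v_2^T\ones$. The natural candidate is the vector that is constant $+\alpha$ on layer~1 and constant $-\alpha$ on layer~2, i.e.\ $v_1=\alpha\ones$, $v_2=-\alpha\ones$ with $\alpha\ne 0$: it is nonzero and satisfies $v^T\ones=\alpha N-\alpha N=0$, hence it is admissible in~\eqref{eq:lambda2}.

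Next I would evaluate the Rayleigh quotient on this vector. Since $L_1\ones=L_2\ones=0$, the intra-layer contributions $v_1^TL_1v_1$ and $v_2^TL_2v_2$ vanish, and the coupling term is $(v_1-v_2)^TW(v_1-v_2)=(2\alpha\ones)^TW(2\alpha\ones)=4\alpha^2\,\ones^TW\ones=4\alpha^2c$, using $W=\diag(w)$ and $w^T\ones=c$. The denominator is $\|v_1\|^2+\|v_2\|^2=2\alpha^2N$. Therefore $\lambda_2(L(w))\le 4\alpha^2c/(2\alpha^2N)=2c/N$. This is exactly the $u_1=u_2=0$ specialization of~\eqref{eq:reduced-inequality} carried out just before the theorem statement, so no new computation is really needed.

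Finally, the right-hand side $2c/N$ does not depend on $w$, so the inequality holds simultaneously for every feasible weight distribution; taking the maximum over $w\ge 0$ with $w^T\ones=c$ yields $F(c)\le 2c/N$, which is~\eqref{eq:uniform-bound}. There is no genuine obstacle here — the only points that require care are checking that the chosen test vector is truly admissible (nonzero and orthogonal to $\ones$) and that the estimate is uniform in $w$ so that it passes to the supremum. The real content of the result is the choice of the antisymmetric constant test vector, and the decomposition $v_i=\pm\alpha\ones+u_i$ performed before the theorem is precisely what isolates that choice.
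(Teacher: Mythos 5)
Your proof is correct and follows essentially the same route as the paper: the paper's argument also reduces to evaluating the Rayleigh quotient on the antisymmetric constant vector $v=(\alpha\ones^T,-\alpha\ones^T)^T$ (the $u_1=u_2=0$ case of its reduced inequality), obtaining $\lambda_2(L(w))\le 2c/N$ uniformly in $w$ and hence the bound on $F(c)$. No issues to flag.
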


Now we turn our attention to the question of attainability of~\eqref{eq:uniform-bound}.  This question is answered by the following theorem.

\begin{theorem}\label{thm:2}
  \label{thm:achieve-bound}
  The inequality in~\eqref{eq:uniform-bound} can only be satisfied as equality if $w=\frac{c}{N}\ones$.
\end{theorem}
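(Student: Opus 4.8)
\emph{Proof plan.} Suppose $w\ge 0$ with $w^T\ones = c$ makes~\eqref{eq:uniform-bound} an equality, i.e. $\lambda_2(L(w)) = 2c/N$; the goal is to deduce $w=\frac{c}{N}\ones$. The starting point is an observation already implicit in the derivation of Theorem~\ref{thm:uniform-bound}: that bound was obtained by restricting the test vectors in~\eqref{eq:lambda2} to the one-parameter family $v=(\alpha\ones^T,-\alpha\ones^T)^T$, for which (using $L_i\ones=0$ and $\ones^TW\ones=c$) the Rayleigh quotient equals \emph{exactly} $2c/N$. Hence, if equality holds in~\eqref{eq:uniform-bound}, every such $v$ is a minimizer in~\eqref{eq:lambda2}; equivalently, inequality~\eqref{eq:reduced-inequality} holds with equality at $u_1=u_2=0$, for every $\alpha$.

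The plan is then to exploit that~\eqref{eq:reduced-inequality} must nonetheless continue to hold for \emph{all} $\alpha,u_1,u_2$. Substituting $\lambda_2(L)=2c/N$, the pure-$\alpha^2$ terms $4\alpha^2c$ and $-\lambda_2(L)\cdot 2\alpha^2N$ cancel identically, leaving
\begin{equation*}
\begin{split}
  u_1^TL_1u_1 + u_2^TL_2u_2 &+ 4\alpha\, w^T(u_1-u_2) + (u_1-u_2)^TW(u_1-u_2) \\
  &- \tfrac{2c}{N}\left(\|u_1\|^2+\|u_2\|^2\right)\ge 0
\end{split}
\end{equation*}
for all $\alpha\in\mathbb{R}$ and all $u_1,u_2$ with $u_1^T\ones=u_2^T\ones=0$. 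For fixed $u_1,u_2$ the left-hand side is affine in $\alpha$, so it can be nonnegative for every $\alpha$ only if the coefficient of $\alpha$ vanishes, i.e. $w^T(u_1-u_2)=0$. Taking $u_2=0$ and letting $u_1$ range over $\ones^\perp$ forces $w\in(\ones^\perp)^\perp=\mathrm{span}(\ones)$, and the budget constraint $w^T\ones=c$ then pins down $w=\frac{c}{N}\ones$.

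An equivalent finish, worth recording as a cross-check, is to use that a minimizer of~\eqref{eq:lambda2} is necessarily a $\lambda_2$-eigenvector of $L(w)$: writing $v^\ast=(\ones^T,-\ones^T)^T$, a direct block computation with $L_i\ones=0$ gives $L(w)v^\ast=(2w^T,-2w^T)^T$, so $L(w)v^\ast=\frac{2c}{N}v^\ast$ yields $2w=\frac{2c}{N}\ones$ at once. The only content beyond the bookkeeping already done in the excerpt is the elementary step that an affine function of $\alpha$ bounded below on all of $\mathbb{R}$ is constant in $\alpha$ (equivalently, that a Rayleigh-quotient minimizer is an eigenvector); this is the one place needing care, and it is standard. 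I would also note that this direction uses no positivity of $w$, and that $c\ge 0$ makes the resulting $w$ automatically feasible.
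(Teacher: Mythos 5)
Your proposal is correct and follows essentially the same route as the paper: substitute $\lambda_2(L)=\frac{2c}{N}$ into~\eqref{eq:reduced-inequality}, note the $\alpha^2$ terms cancel so the expression is affine in $\alpha$, force the linear coefficient $4w^T(u_1-u_2)$ to vanish for all admissible $u_1,u_2$, and conclude $w\in\mathrm{span}(\ones)$ with the budget constraint fixing $w=\frac{c}{N}\ones$. The eigenvector cross-check is a nice additional observation but not needed.
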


\begin{proof}
  Suppose the weights $w$ are chosen such that the Laplacian $L$ satisfies $\lambda_2(L)=2\frac{c}{N}$.  Then~\eqref{eq:reduced-inequality} simplifies to
  \begin{equation*}
    \begin{split}
      u_1^TL_1u_1 &+ u_2^TL_2u_2 + 4\alpha w^T(u_1-u_2)\\
      & + (u_1-u_2)^TW(u_1-u_2) \\
      &- \frac{2c}{N}\left(\|u_1\|^2 + \|u_2\|^2\right) \ge 0
    \ \ \forall \  \alpha, u_1^T\ones=u_2^T\ones=0.
    \end{split}
  \end{equation*}
  This can only be true if the linear coefficient in $\alpha$, $4w^T(u_1-u_2)$, vanishes for every choice of $u_1,u_2$ satisfying $u_1^T\ones = u_2^T\ones = 0$.  This implies that $w$ is parallel to $\ones$ and, since $w^T\ones = c$, the theorem follows.
\end{proof}

The previous theorem shows that when the bound~\eqref{eq:uniform-bound} is attained, it can only be attained by the uniform choice of weights $w=\frac{c}{N}\ones$.  The next theorem characterizes exactly the budgets for which the bound is attained.

\begin{theorem}\label{thm:3}
  For a given two-layer network, define the constant
  \begin{equation}
    \label{eq:cstar}
    \begin{split}
    c^* := &N \min_{\substack{u_1^T\ones=u_2^T\ones=0\\u_1 + u_2\ne 0}}
      \frac{u_1^TL_1u_1 + u_2^TL_2u_2}{\|u_1+u_2\|^2}\\
      \end{split}
  \end{equation}
  Then, for all budgets $c\ge 0$, $F(c)=\frac{2c}{N}$ if and only if $c\le c^*$.
\end{theorem}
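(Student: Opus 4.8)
The plan is to reduce the whole statement to the behaviour of $\lambda_2$ at the single uniform weighting $w=\tfrac{c}{N}\ones$, for which the reduced inequality~\eqref{eq:reduced-inequality} can be evaluated in closed form. First I would record the reduction: the feasible set $\{w\ge 0:\ w^T\ones=c\}$ is compact and $w\mapsto\lambda_2(L(w))$ is continuous, so the maximum in~\eqref{eq:F} is attained, say at $w^\star$. If $F(c)=\tfrac{2c}{N}$ then $\lambda_2(L(w^\star))=\tfrac{2c}{N}$, so Theorem~\ref{thm:2} forces $w^\star=\tfrac{c}{N}\ones$; conversely, if $\lambda_2\!\left(L(\tfrac{c}{N}\ones)\right)=\tfrac{2c}{N}$ then $F(c)\ge\tfrac{2c}{N}$, and Theorem~\ref{thm:1} gives equality. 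Hence $F(c)=\tfrac{2c}{N}$ \emph{if and only if} $\lambda_2\!\left(L(\tfrac{c}{N}\ones)\right)=\tfrac{2c}{N}$, and the task becomes to determine for which $c$ this holds.

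Next I would specialize~\eqref{eq:reduced-inequality} to $W=\tfrac{c}{N}I$. Since $u_1^T\ones=u_2^T\ones=0$, the cross term $4\alpha w^T(u_1-u_2)=\tfrac{4\alpha c}{N}\ones^T(u_1-u_2)$ vanishes and $(u_1-u_2)^TW(u_1-u_2)=\tfrac{c}{N}\|u_1-u_2\|^2$. Because~\eqref{eq:reduced-inequality} is exactly the assertion that the Rayleigh quotient in~\eqref{eq:lambda2} is $\ge\lambda_2(L)$, and because Theorem~\ref{thm:1} already gives $\lambda_2\!\left(L(\tfrac{c}{N}\ones)\right)\le\tfrac{2c}{N}$, equality holds precisely when~\eqref{eq:reduced-inequality} remains valid after substituting $\lambda_2(L)=\tfrac{2c}{N}$. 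Making that substitution, the terms $4\alpha^2c$ and $\tfrac{2c}{N}\cdot 2\alpha^2N$ cancel, the $\alpha$-dependence disappears entirely, and (using the polarization identity $\|u_1-u_2\|^2-2\|u_1\|^2-2\|u_2\|^2=-\|u_1+u_2\|^2$) what remains is
\begin{equation*}
  u_1^TL_1u_1+u_2^TL_2u_2\ \ge\ \tfrac{c}{N}\|u_1+u_2\|^2
  \qquad\forall\ u_1^T\ones=u_2^T\ones=0 .
\end{equation*}

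Finally I would read off the threshold. When $u_1+u_2=0$ this inequality holds trivially since $L_1,L_2$ are positive semidefinite; when $u_1+u_2\ne0$ it is equivalent to $\tfrac{c}{N}\le\frac{u_1^TL_1u_1+u_2^TL_2u_2}{\|u_1+u_2\|^2}$. Therefore the whole family holds exactly when $\tfrac{c}{N}\le\tfrac{c^*}{N}$, i.e.\ $c\le c^*$; combining this with the reduction of the first paragraph proves the theorem. I expect the only delicate point to be the bookkeeping in the second paragraph: one must check that the $\alpha$-dependent terms cancel \emph{precisely} at $\lambda_2=\tfrac{2c}{N}$ — this resonance is what makes the threshold sharp — and that the equivalence ``$\lambda_2=\tfrac{2c}{N}$ $\iff$ \eqref{eq:reduced-inequality} holds at $\tfrac{2c}{N}$'' genuinely uses Theorem~\ref{thm:1} for the forward direction; everything else is the polarization identity and the definition~\eqref{eq:cstar} of $c^*$.
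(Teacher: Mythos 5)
Your proposal is correct and follows essentially the same route as the paper's proof: reduce to the uniform weighting via Theorems~\ref{thm:1} and~\ref{thm:2}, observe that $\lambda_2\le\frac{2c}{N}$ always holds there, and then show that the remaining condition $\lambda_2\ge\frac{2c}{N}$ collapses, via the parallelogram identity, to $u_1^TL_1u_1+u_2^TL_2u_2\ge\frac{c}{N}\|u_1+u_2\|^2$, which is exactly the definition of $c^*$. The only cosmetic differences are that you obtain the upper bound from Theorem~\ref{thm:1} (with an explicit compactness/attainment argument) where the paper exhibits the eigenvector $(\ones^T,-\ones^T)^T$, and that you specialize~\eqref{eq:reduced-inequality} rather than recomputing $v^TLv-\frac{2c}{N}\|v\|^2$ directly.
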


\begin{proof}
  In light of Theorem~\ref{thm:achieve-bound}, we restrict our attention to the case of uniform weights $w=\frac{c}{N}\ones$ and use the notation $L=L(c)=L(w)$.  From Theorem~\ref{thm:achieve-bound}, we see that $F(c)=\frac{2c}{N}$ if and only if $\lambda_2(L(c))=\frac{2c}{N}$.  It is straightforward to check that $\frac{2c}{N}$ is an eigenvalue of $L(c)$ for any $c\ge 0$, with eigenvector $(\ones^T,-\ones^T)^T$.  Since $L(c)$ is positive semi-definite and $\lambda_1(L(c))=0$ for all $c$, it follows that $\lambda_2(L(c))\le\frac{2c}{N}$ for all $c$.  Thus, we have $F(c)=\frac{2c}{N}$ if and only if $\lambda_2(L(c))\ge \frac{2c}{N}$.  Now, as before, we write a vector $v$ orthogonal to $\ones$ as
  \begin{equation*}
    v =
    \begin{bmatrix}
      v_1\\v_2
    \end{bmatrix}
    =
    \begin{bmatrix}
      \alpha\ones + u_1\\
      -\alpha\ones + u_2
    \end{bmatrix},\quad u_1^T\ones = u_2^T\ones = 0.
  \end{equation*}
  Recalling~\eqref{eq:lambda2}, we observe that $\lambda_2(L(c))\ge\frac{2c}{N}$ if and only if the following inequality holds for every such choice of $v$ (or, equivalently, every choice of $\alpha$, $u_1$ and $u_2$).
  \begin{equation*}
    \begin{split}
      0 &\le v^TLv - \frac{2c}{N}\|v\|^2 \\
      & = v_1^TL_1v_1 + v_2^TL_2v_2 + \frac{c}{N}\|v_1-v_2\|^2
      - \frac{2c}{N}\left(\|v_1\|^2 + \|v_2\|^2\right) \\
      &= u_1^TL_1u_1 + u_2^TL_2u_2 - \frac{c}{N}\|u_1+u_2\|^2.
    \end{split}
  \end{equation*}
  This inequality holds for all $u_1^T\ones=u_2^T\ones=0$ if and only if $c\le c^*$ as defined in~\eqref{eq:cstar}, completing the proof.

\end{proof}
The threshold obtained by Eq. \eqref{eq:cstar} is exactly equivalent to the threshold found in \cite{sahneh2014exact} (see Supplemental Material):
\begin{equation}
\label{eq:exact}
c^* = N\lambda_{2}\left(\left(L_1^\dagger + L_2^\dagger\right)^\dagger\right),
\end{equation}
where $L^\dagger$ represents the Moore-Penrose pseudoinverse of $L$.
At the threshold a rough lower-bound for $\lambda_2(L)$ is
\begin{equation}
    \label{ineq:threshold}
 \lambda_2(L)=\frac{2}{N}c^*\geq \min\{\lambda_2(L_1),\lambda_2(L_2)\}.
\end{equation}
One way to see this is to observe that:
\begin{equation*}
\frac{u_1^TL_1u_1 + u_2^TL_2u_2}{\|u_1+u_2\|^2} \geq    \frac{\|u_1\|^2+\|u_2\|^2}{\|u_1+u_2\|^2}\min\{\lambda_2(L_1),\lambda_2(L_2)\}.
\end{equation*}
Inequality~\eqref{ineq:threshold} then follows from the parallelogram law \cite{cantrell2000modern}.
An upper bound for $\lambda_2(L)$ is given in~\cite{gomez2013diffusion}
\begin{equation}
  \label{eq:gomez-bound}
  \lambda_2(L) \le \frac{1}{2}\lambda_2(L_1+L_2).
\end{equation}
In the special case of identical layers ($L_1=L_2$) with corresponding nodes connected, the bound in \eqref{eq:gomez-bound} is attained with uniform weights at the threshold budget $c^*$~\cite{Radicchi2013}. This can be seen by combining (\ref{ineq:threshold}) and (\ref{eq:gomez-bound}).  Therefore, in this case, uniform weights are optimal for budgets $c\le c^*$, and increasing the budget beyond $c^*$  cannot increase the algebraic connectivity, regardless of the weight allocation.



For general structures, it is possible to substantially improve the algebraic connectivity by increasing the budget beyond $c^*$ using an optimal weight distribution.  Figs.~\ref{fig:CompareUniformOptimal}a and~\ref{fig:CompareUniformOptimal}b compare the optimal value of $\lambda_2(L)$ to the one obtained by the uniform distribution as the budget $c$ varies for two different network structures. In both cases, the optimal distribution gives a higher algebraic connectivity after the threshold.
 \begin{figure*}
\subfloat[]{%
  \includegraphics[clip,width=.8\columnwidth]{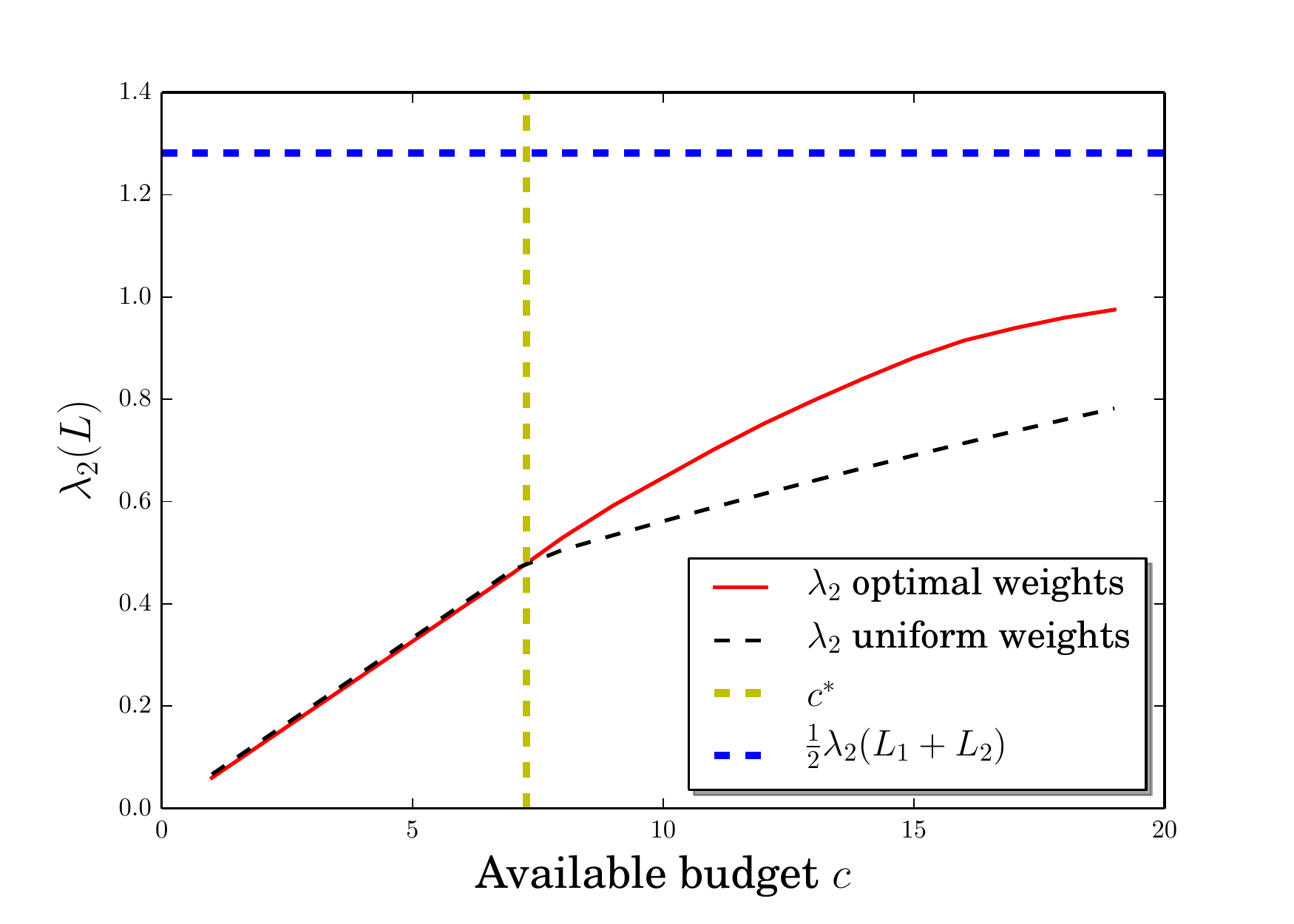}%
}
\subfloat[]{%
  \includegraphics[clip,width=.8\columnwidth]{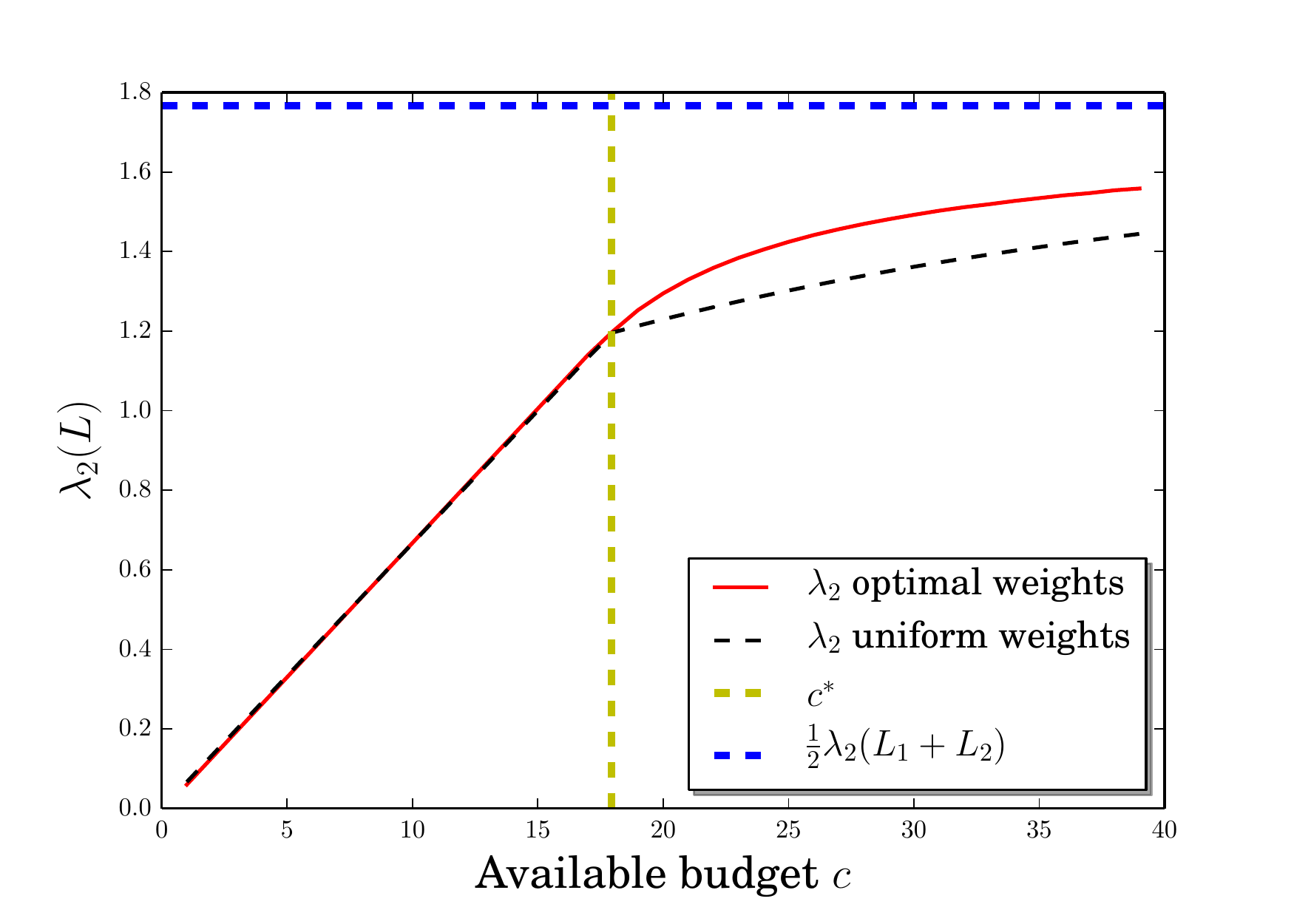}%
}

\subfloat[]{%
  \includegraphics[clip,width=.8\columnwidth]{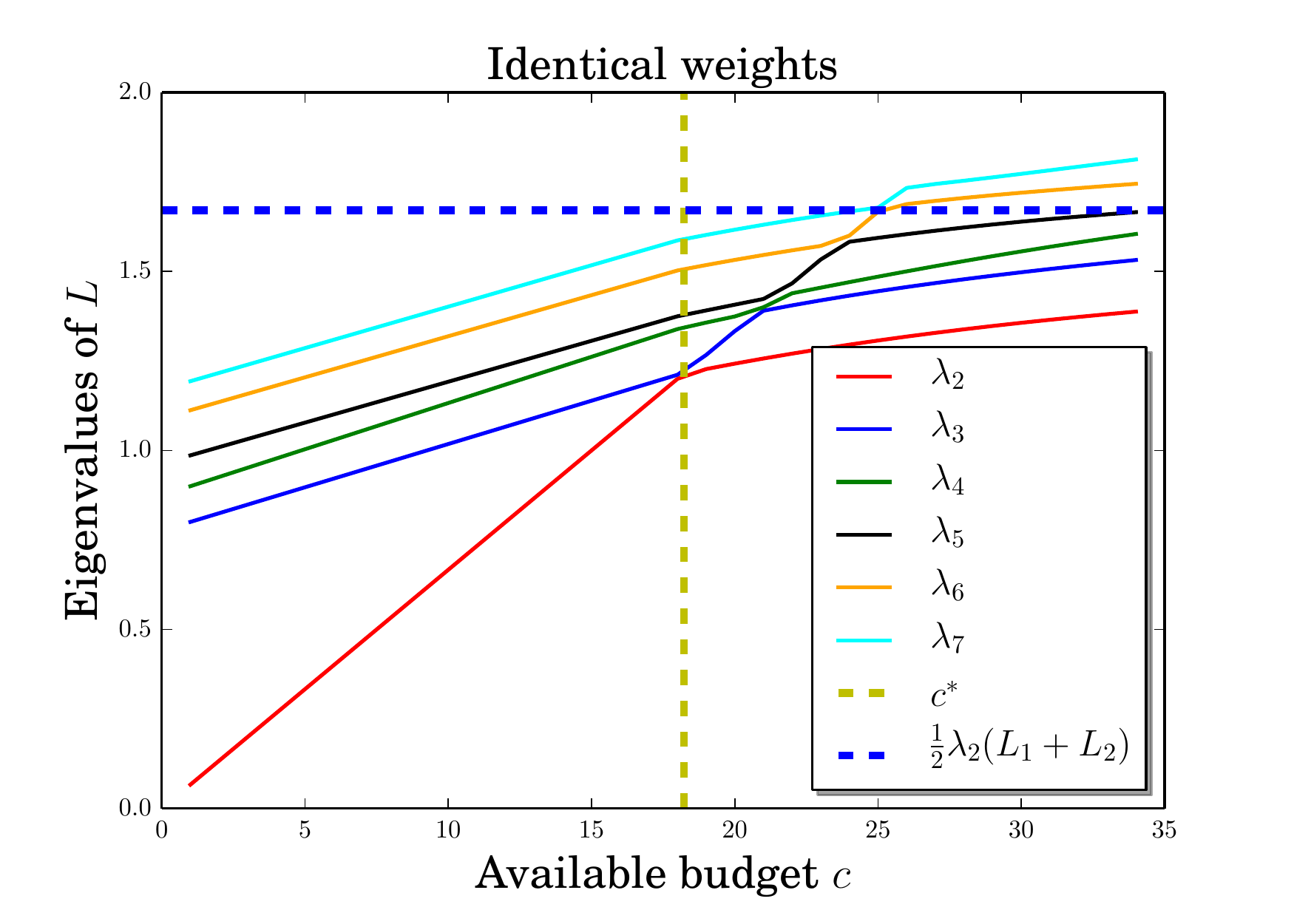}%
}
\subfloat[]{%
  \includegraphics[clip,width=.8\columnwidth]{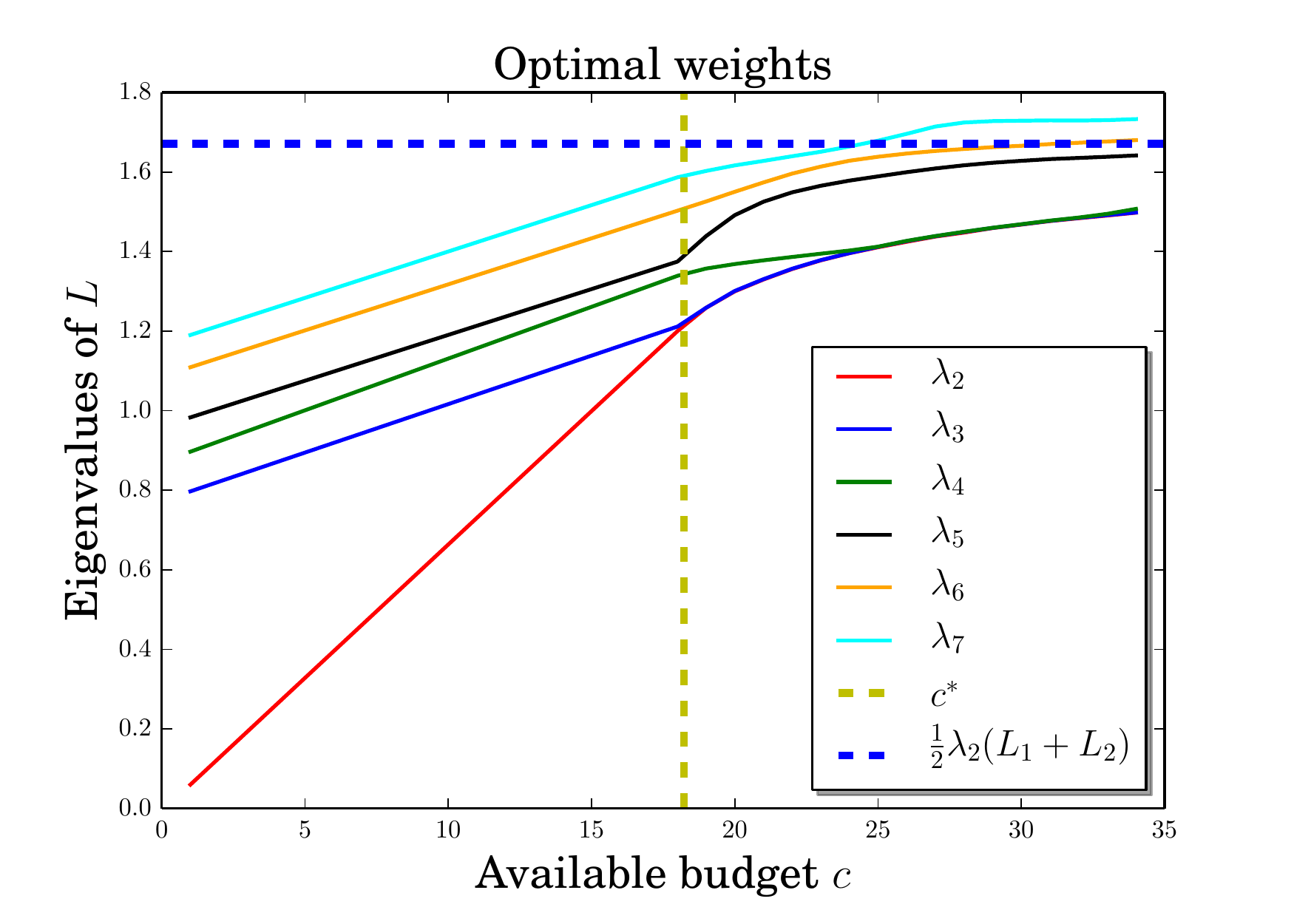}%
}
\caption{(a) and (b) Plots of $\lambda_2(L)$ with different amount of available budget. The solid red line is for the optimal weights and the dashed black line is for uniform weights. The threshold budget and upper-bound is shown with yellow and blue dashed lines respectively. 
The upper-bound is from Eq.~\eqref{eq:gomez-bound} and the threshold is from Eq.~\eqref{eq:exact}. 
(a) A structure of two Erd\"{o}s-Renyi networks each with $30$ nodes and (b) a structure of two scale-free networks each with $30$ nodes. (c) First seven eigenvalues of Laplacian matrix of $\GG$ considering a uniform distribution of weights for the multiplex in (b). 
(d) First seven eigenvalues of Laplacian matrix of $\GG$ considering an optimal distribution of weights for the multiplex in (b).}
\label{fig:CompareUniformOptimal}
\end{figure*}
 

In Fig. \ref{fig:CompareUniformOptimal}c, we plot the first seven eigenvalues of $L$ (omitting the zero eigenvalue) for a multiplex with identical weights on the inter-layer links. Because $\frac{2c}{N}$ is always an eigenvalue and $\lambda_3(L)>\frac{2c}{N}$ for $c\rightarrow 0$, increasing $c$, $\lambda_2(L)$ and $\lambda_3(L)$ cross.
For the same multiplex with optimal distribution of inter-layer weights, we plot the eigenvalues in Fig. \ref{fig:CompareUniformOptimal}d.
When increasing the budget beyond the threshold, the second and third eigenvalues
 coalesce and are strictly less than $\frac{2c}{N}$.
Since \eqref{eq:F} is a convex optimization problem, we know the optimal $w_i$'s vary continously with $c$, and smooothly 
away from the finite set of budgets where eigenvalue multiplicities change. 


When $c\leq c^*$, the Fiedler vector is $v = \frac{1}{\sqrt{2N}}[\ones, -\ones]$ and the Fiedler cut distinguishes the layers. For $c>c^*$, due to the multiplicity of $\lambda_2(L)$, there is a corresponding Fiedler eigenspace.  Therefore, the two layers are not as easily recognizable as before.

In Fig. \ref{fig:CompareUniformOptimal}, we also observe that  for $c>c^*$,  $\lambda_2$ increases more slowly. Moreover, as Fig.~\ref{fig:WD} shows, we can have very non-uniform weights in this case.
\begin{figure*}
\subfloat[]{%
  \includegraphics[clip,width=.53\columnwidth]{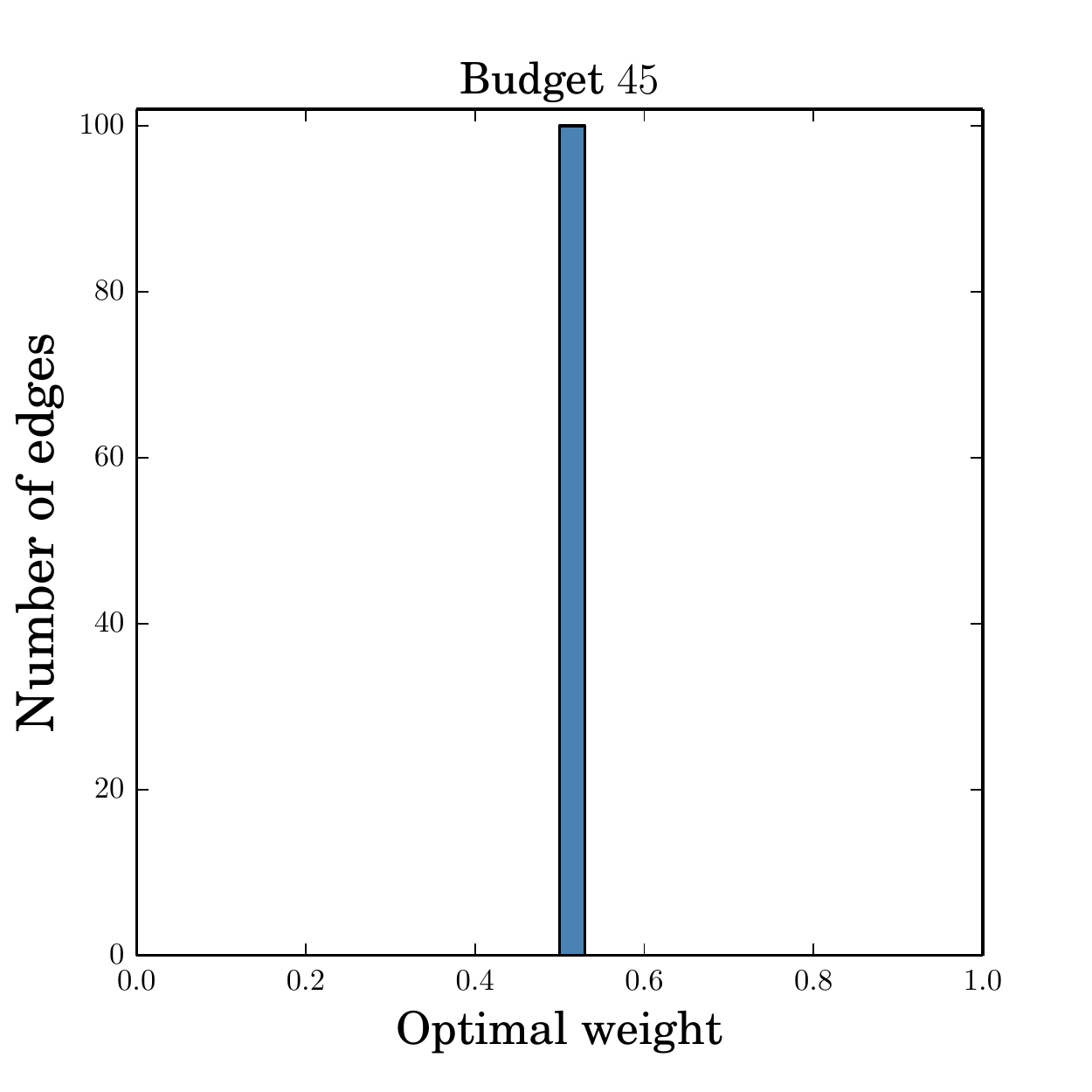}%
}
\subfloat[]{%
  \includegraphics[clip,width=.53\columnwidth]{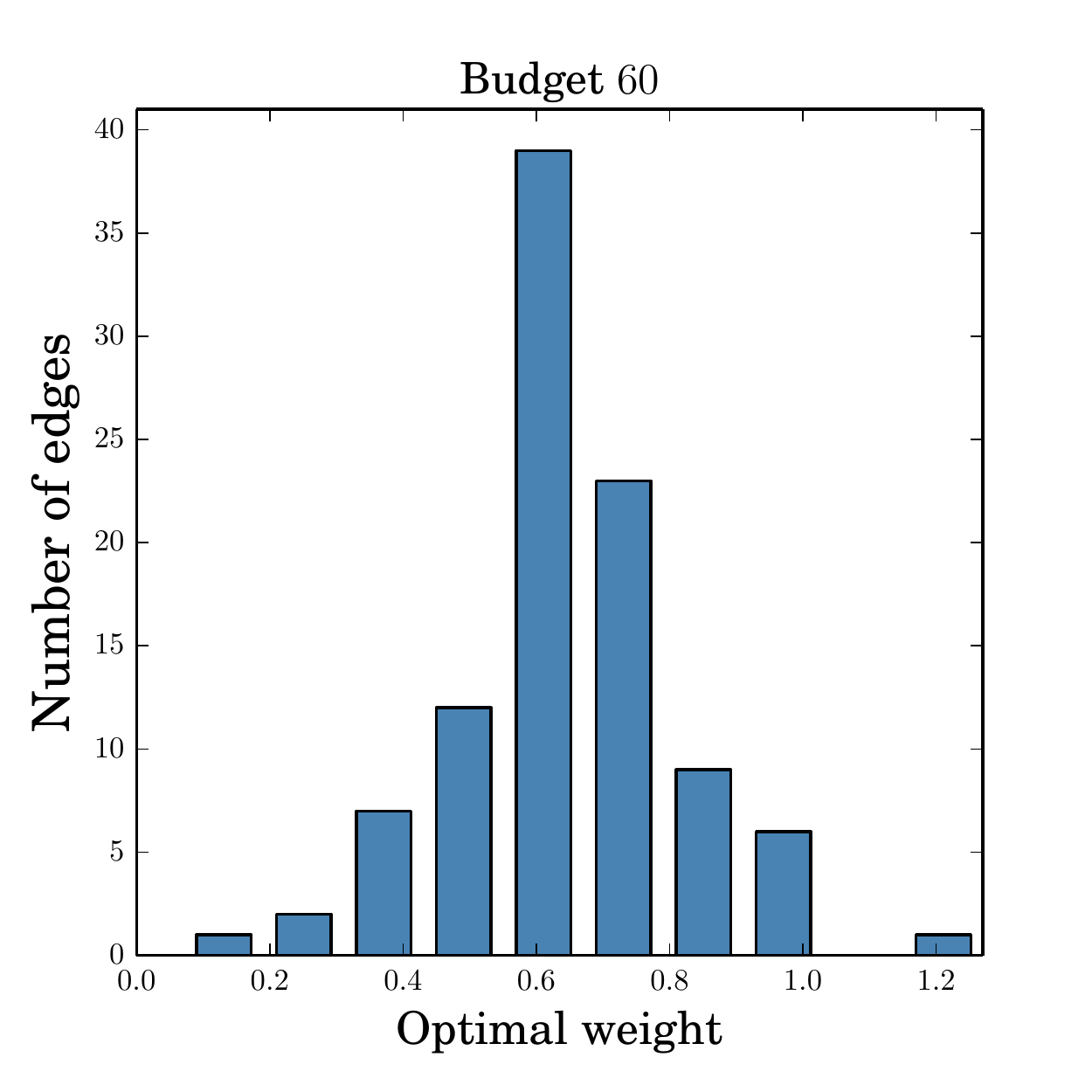}%
}
\subfloat[]{%
  \includegraphics[clip,width=.53\columnwidth]{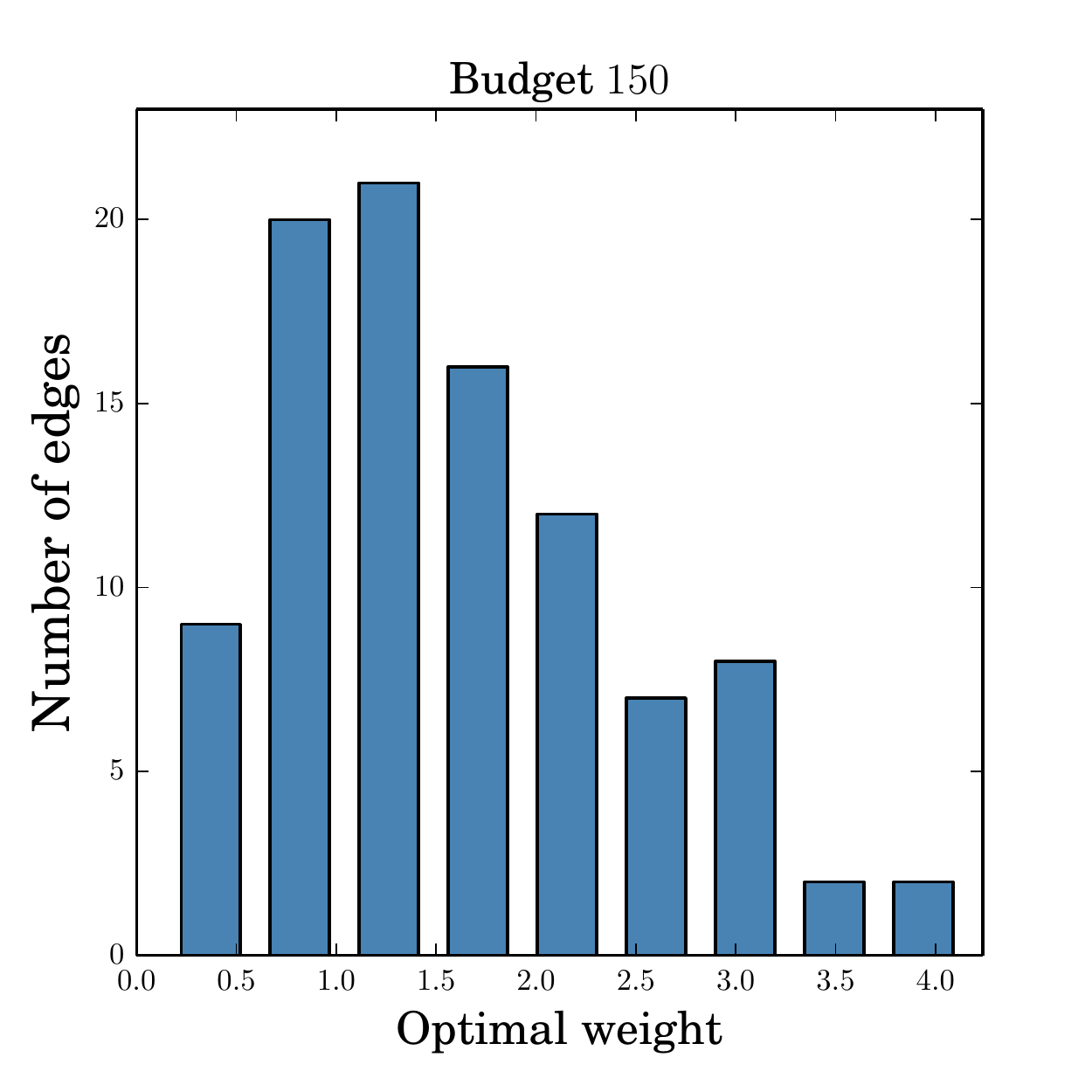}%
}
\subfloat[]{%
  \includegraphics[clip,width=.53\columnwidth]{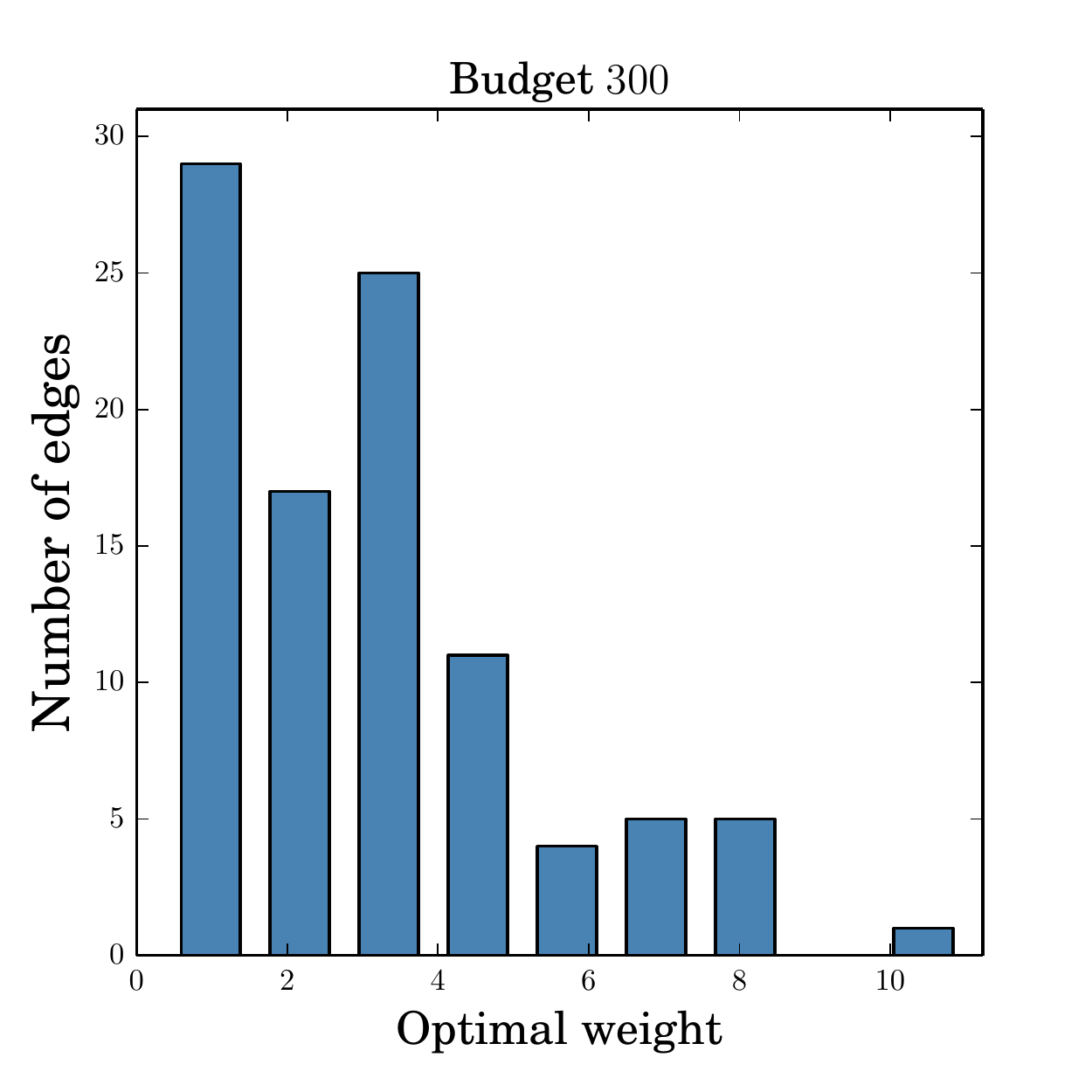}%
}
\caption{Optimal weight distribution for different amount of budgets. The stucture of a multiplex with two scale free network layers, with $N = 100$ nodes and $\Vert E_1\Vert =  196$ and $\Vert E_2\Vert = 291$. In (a) budget is lower than threshold and uniform distribution is optimal. In this example, the threshold budget $c^*$ is $51.4$.}\label{fig:WD}
\end{figure*}
These optimal weights  represent  the importance  of each link in improving the algebraic connectivity of the whole network. 
%
%
 
%
In summary, we have shown that before a threshold budget that can be analytically computed, the largest possible algebraic connectivity is a linear function of the budget and can  only be attained by the uniform weight distribution. Since the threshold budget is always strictly positive, for low enough budgets it is not necessary to solve \eqref{eq:F}. On the other hand, for larger budgets,  \eqref{eq:F} can be solved with efficient semi-definite programming solvers to find the optimal weights. In particular, heuristic methods based solely on the information of each layer are too  
blunt to notice this threshold phenomenon. 
\bibliographystyle{apsrev4-1}
\bibliography{Lap_Eig_Letter}
\onecolumngrid
\appendix
\section*{Supplementary Material}

We have defined the threshold budget $c^*$ as
\begin{equation}
    \label{eq:threshold}
    c^* = N \min_{\substack{u_1^T\ones=u_2^T\ones=0\\u_1+u_2\ne 0}}
    \frac{u_1^TL_1u_1 + u_2^TL_2u_2}{\|u_1+u_2\|^2},
\end{equation}
where $L_1$ and $L_2$ are the Laplacian matrices for the two individual layers.  

\begin{theorem}
The threshold budget $c^*$ defined in \eqref{eq:threshold} satisfies
    \begin{equation}
        \frac{c^*}{N} = \lambda_{2}\left(\left(L_1^\dagger + L_2^\dagger\right)^\dagger\right) 
    \end{equation}
\end{theorem}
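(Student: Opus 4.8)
The plan is to collapse the inner layer of the minimization in~\eqref{eq:threshold} using Lagrange multipliers, which turns the objective into a Rayleigh quotient for the matrix $\left(L_1^\dagger + L_2^\dagger\right)^\dagger$.

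First I would record the linear-algebra setup. Since each layer is connected, $\ker L_i = \operatorname{span}(\ones)$ for $i=1,2$, hence $\operatorname{range}(L_i) = \ones^\perp$, and $L_i^\dagger$ acts as the inverse of $L_i$ on $\ones^\perp$ while annihilating $\ones$. Set $X \defeq L_1^\dagger + L_2^\dagger$. As a sum of two positive semidefinite matrices with common kernel $\operatorname{span}(\ones)$, $X$ is positive semidefinite with $\ker X = \operatorname{span}(\ones)$; consequently $X^\dagger = \left(L_1^\dagger + L_2^\dagger\right)^\dagger$ is positive semidefinite, has eigenvalue $0$ with eigenvector $\ones$, and acts as the inverse of $X$ on $\ones^\perp$.

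Next, fix $x\in\ones^\perp$ with $x\ne 0$ and consider the inner problem
\begin{equation*}
  g(x) \defeq \min\left\{\, u_1^T L_1 u_1 + u_2^T L_2 u_2 \ :\ u_1,u_2\in\ones^\perp,\ u_1+u_2 = x \,\right\}.
\end{equation*}
This is a strictly feasible convex quadratic program (e.g.\ $u_1=u_2=x/2$ is feasible), so the minimum is attained and characterized by stationarity of the Lagrangian $u_1^T L_1 u_1 + u_2^T L_2 u_2 - 2\lambda^T(u_1+u_2-x)$ with multiplier $\lambda\in\ones^\perp$. Stationarity gives $L_1u_1 = \lambda = L_2u_2$, hence $u_i = L_i^\dagger\lambda$ (legitimate since $\lambda\in\ones^\perp=\operatorname{range}(L_i)$), and the constraint forces $X\lambda = u_1+u_2 = x$, i.e.\ $\lambda = X^\dagger x$ because $x\in\ones^\perp$. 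Substituting back, $g(x) = u_1^T(L_1u_1) + u_2^T(L_2u_2) = (u_1+u_2)^T\lambda = x^T X^\dagger x$. Then, interchanging the two minimizations and using homogeneity to normalize $x = u_1+u_2$,
\begin{equation*}
  \frac{c^*}{N} = \min_{\substack{x\in\ones^\perp\\ x\ne 0}} \frac{g(x)}{\|x\|^2} = \min_{\substack{x\in\ones^\perp\\ x\ne 0}} \frac{x^T X^\dagger x}{\|x\|^2} = \lambda_2\!\left(X^\dagger\right),
\end{equation*}
where the last equality is the Courant--Fischer characterization applied to the positive semidefinite matrix $X^\dagger$, whose smallest eigenvalue $0$ is attained exactly on $\operatorname{span}(\ones)$, so the minimum over $\ones^\perp$ is the second-smallest eigenvalue.

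I expect the only delicate point to be the bookkeeping around the subspace $\ones^\perp$: one must verify that the Lagrange multiplier may be taken in $\ones^\perp$, that $x\in\ones^\perp$ is precisely the condition making $X\lambda = x$ solvable with solution $X^\dagger x$, and that the two minimizations may be exchanged and normalized as above. Everything else is routine; in particular $g(x) = x^T\left(L_1^\dagger + L_2^\dagger\right)^\dagger x$ is just the classical ``parallel-sum'' identity $\min_{u+v=x}\left(u^T A u + v^T B v\right) = x^T\left(A^{-1}+B^{-1}\right)^{-1}x$, adapted here to the singular matrices $A = L_1$, $B = L_2$ on their common invariant subspace $\ones^\perp$.
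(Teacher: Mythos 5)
Your proposal is correct and follows essentially the same route as the paper: split the minimization into an outer problem over $x=u_1+u_2$ and an inner quadratic program, solve the inner one by Lagrange multipliers to get $u_i=L_i^\dagger\lambda$ with $\bigl(L_1^\dagger+L_2^\dagger\bigr)\lambda=x$, and recognize the resulting $x^T\bigl(L_1^\dagger+L_2^\dagger\bigr)^\dagger x/\|x\|^2$ as the Rayleigh quotient whose minimum over $\ones^\perp$ is $\lambda_2$. Your evaluation $g(x)=(u_1+u_2)^T\lambda$ is a slightly slicker shortcut than the paper's explicit computation via $A^\dagger AA^\dagger=A^\dagger$, but the argument is the same.
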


\begin{proof}
    We begin by rewriting the minimization in~\eqref{eq:threshold}:
    \begin{equation}
    \label{eq:split-min}
    \begin{split}
        \min_{\substack{u_1^T\ones=u_2^T\ones=0\\u_1+u_2\ne 0}}
        \frac{u_1^TL_1u_1 + u_2^TL_2u_2}{\|u_1+u_2\|^2}
        &=
        \min_{\substack{u^T\ones=0\\u\ne 0}}
        \min_{\substack{u_1^T\ones=u_2^T\ones=0\\u_1+u_2=u}}
        \frac{u_1^TL_1u_1 + u_2^TL_2u_2}{\|u\|^2}\\
        &=
        \min_{\substack{u^T\ones=0\\u\ne 0}}
        \frac{1}{\|u\|^2}
        \min_{\substack{u_1^T\ones=u_2^T\ones=0\\u_1+u_2=u}}
        \left(u_1^TL_1u_1 + u_2^TL_2u_2\right).
    \end{split}
    \end{equation}
    
    To solve the inner minimization problems, we introduce Lagrange multipliers to find that the minimizing $u_1$ and $u_2$ satisfy
    \begin{equation*}
        L_1u_1 = \nu\ones + \mu,\qquad
        L_2u_2 = \eta\ones + \mu.
    \end{equation*}
    Taking an inner product of each of these with the $\ones$ vector shows that
    \begin{equation*}
    \nu = \eta = - \frac{\mu^T\ones}{N}, 
    \end{equation*}
    so that
    \begin{equation*}
    u_1 = L_1^\dagger\left(\mu - \frac{\mu^T\ones}{N}\right),\qquad
    u_2 = L_2^\dagger\left(\mu - \frac{\mu^T\ones}{N}\right).
    \end{equation*}
    Thus, without loss of generality, $\mu$ can be taken to be orthogonal to $\ones$.  With this form, $u_1$ and $u_2$ are already orthongal to $\ones$ as well.  In order to satisfy the constraint $u_1+u_2=u$, we must have
    \begin{equation*}
        \left(L_1^\dagger + L_2^\dagger\right)\mu = u,\quad\text{i.e.},\quad
        \mu = \left(L_1^\dagger + L_2^\dagger\right)^\dagger u.
    \end{equation*}
    
    From this, we see that the minimizing $u_1$ and $u_2$ of the inner minimization problem in~\eqref{eq:split-min} satisfy
    \begin{equation*}
        u_1 = L_1^\dagger\left(L_1^\dagger + L_2^\dagger\right)^\dagger u,\qquad
        u_2 = L_2^\dagger\left(L_1^\dagger + L_2^\dagger\right)^\dagger u,
    \end{equation*}
    giving a minimum value of
    \begin{equation}
    \label{eq:rayleigh}
    \begin{split}
        u_1^TL_1u_1 + u_2^TL_2u_2 
        &= u^T\left(L_1^\dagger + L_2^\dagger\right)^\dagger L_1^\dagger L_1 L_1^\dagger\left(L_1^\dagger + L_2^\dagger\right)^\dagger u +\\
        &\qquad u^T\left(L_1^\dagger + L_2^\dagger\right)^\dagger L_2^\dagger L_2L_2^\dagger\left(L_1^\dagger + L_2^\dagger\right)^\dagger u\\
        &= u^T\left(L_1^\dagger + L_2^\dagger\right)^\dagger L_1^\dagger\left(L_1^\dagger + L_2^\dagger\right)^\dagger u +\\
        &\qquad u^T\left(L_1^\dagger + L_2^\dagger\right)^\dagger L_2^\dagger\left(L_1^\dagger + L_2^\dagger\right)^\dagger u\\
        &= u^T\left(L_1^\dagger + L_2^\dagger\right)^\dagger\left(L_1^\dagger + L_2^\dagger\right)\left(L_1^\dagger + L_2^\dagger\right)^\dagger u\\
        &= u^T\left(L_1^\dagger + L_2^\dagger\right)^\dagger u.
    \end{split}
    \end{equation}
    Here, we have used the identity $A^\dagger AA^\dagger=A^\dagger$.
    
    Substituting back into~\eqref{eq:split-min}, we have
    \begin{equation*}
        \frac{c^*}{N} = \min_{\substack{u^T\ones=0\\u\ne 0}}
            \frac{u^T\left(L_1^\dagger + L_2^\dagger\right)^\dagger u}{\|u\|^2}.
    \end{equation*}
    Since $L_1$ and $L_2$ are positive semidefinite, so are $L_1^\dagger$ and $L_2^\dagger$ and, consequently, so are $L_1^\dagger + L_2^\dagger$ and $\left(L_1^\dagger + L_2^\dagger\right)^\dagger$.  Since the component networks are assumed connected, the nullspace of $\left(L_1^\dagger + L_2^\dagger\right)^\dagger$ is spanned by the vector $\ones$.  The Rayleigh quotient in~\eqref{eq:rayleigh} is therefore minimized over the orthogonal complement of the eigenspace associated with the first eigenvalue of $\left(L_1^\dagger + L_2^\dagger\right)^\dagger$ and the theorem follows.
\end{proof}
\end{document}